\numberwithin{equation}{section}
\numberwithin{figure}{section}
\theoremstyle{remark}
\newtheorem*{notation*}{\protect\notationname}
\theoremstyle{plain}
\newtheorem{thm}{\protect\theoremname}
\theoremstyle{definition}
\newtheorem{defn}[thm]{\protect\definitionname}
\theoremstyle{remark}
\newtheorem*{rem*}{\protect\remarkname}
\theoremstyle{plain}
\newtheorem{cor}[thm]{\protect\corollaryname}
\theoremstyle{plain}
\newtheorem{lem}[thm]{\protect\lemmaname}
\theoremstyle{plain}
\newtheorem{prop}[thm]{\protect\propositionname}
\newcommand\thmsname{\protect\theoremname}
\newcommand\nm@thmtype{theorem}
\theoremstyle{plain}
\newenvironment{namedthm}[1][Undefined Theorem Name]{
  \ifx{#1}{Undefined Theorem Name}\renewcommand\nm@thmtype{theorem*}
  \else\renewcommand\thmsname{#1}\renewcommand\nm@thmtype{namedtheorem}
  \fi
  \begin{\nm@thmtype}}
  {\end{\nm@thmtype}}
\theoremstyle{remark}
\newtheorem*{acknowledgement*}{\protect\acknowledgementname}
\theoremstyle{definition}
  \newtheorem{problemAMS}[thm]{\protect Problem}
\newlength{\outpt}   \settowidth{\outpt}{Question: }
\centering \fbox{\parbox{.9\textwidth}{%
    {\centering\scshape #1\par}%
    \parskip=1ex
    \everypar{\hangindent=\the\outpt}%
\hspace{0cm}\rlap{Input:}\hspace{\outpt}\BODY
    
}}}\end{center}
\providecommand{\acknowledgementname}{Acknowledgement}
\providecommand{\corollaryname}{Corollary}
\providecommand{\definitionname}{Definition}
\providecommand{\lemmaname}{Lemma}
\providecommand{\notationname}{Notation}
\providecommand{\propositionname}{Proposition}
\providecommand{\remarkname}{Remark}
\providecommand{\theoremname}{Theorem}
\begin{document}
\title{A Lower Bound for Primality\\
of Finite Languages\vspace{0.8em}}
\author{Philip Sieder}

\maketitle
\global\long\def\amss{\SwapAboveDisplaySkip}%
\global\long\def\npcomp{\mathrm{NP}\text{\,-\,}complete}%
\global\long\def\np{\mathrm{NP}}%
\global\long\def\conp{\mathrm{coNP}}%
\global\long\def\nphard{\mathrm{NP}\text{\,-\,}hard}%
\global\long\def\pit{\Pi_{2}^{\mathrm{P}}}%
\global\long\def\sigt{\Sigma_{2}^{\mathrm{P}}}%
\global\long\def\pspacecomp{\mathrm{PSPACE}\text{\,-\,}complete}%
\global\long\def\conphard{\mathrm{coNP}\text{\,-\,}hard}%
\global\long\def\primfin{\mathrm{Primality_{finite}}}%
\global\long\def\primreg{\mathrm{Primality_{regular}}}%
\global\long\def\nfa{\mathrm{NFA}}%
\global\long\def\dfa{\mathrm{DFA}}%
\global\long\def\sqtil{\mathrm{SquareTiling_{rel}}}%
\global\long\def\sqtile{\mathrm{SquareTiling_{edge}}}%
\global\long\def\coneq{\mathrm{ConcatenationEquivalence_{finite}}}%
\global\long\def\lao{L_{1}}%
\global\long\def\lat{L_{2}}%
\global\long\def\lan{L}%
\global\long\def\alph{\Sigma}%
\global\long\def\til{\Theta}%
\global\long\def\nat{\mathbb{N}}%
\global\long\def\integ{\mathbb{Z}}%
\global\long\def\on#1{\mathcal{O}(#1)}%
\global\long\def\state#1{\varsigma_{#1}}%
\global\long\def\stat#1{\sigma_{#1}}%
\mathtoolsset{centercolon}
\global\long\def\emptyword{\varepsilon}%
 
\global\long\def\part{P}%
\global\long\def\parlan#1#2{L_{#1}^{#2}}%
\global\long\def\tilel#1#2#3#4{
\begin{tikzpicture}[scale=.45,baseline=-.8mm]%(b1.base)]
	\node[] (v1) at (-1,1) {};
	\node[] (v2) at (1,1) {};
	\node[] (v3) at (1,-1) {};
	\node[] (v4) at (-1,-1) {};
	\draw (v1.center) --node[below,inner sep=2] {$\scriptstyle #1$} (v2.center) -- node[left,inner sep=2](b1) {$\scriptstyle  #2$} (v3.center) -- node[above,inner sep=2] {$\scriptstyle  #3$}  (v4.center) -- node[right,inner sep=2] {$\scriptstyle  #4$} (v1.center) --(v2.center);
	\draw (v1.center) -- (v3.center)  (v2.center) -- (v4.center);
	\end{tikzpicture}
}

\global\long\def\tile#1#2#3#4{\tilel{#1}{#2}{#3}{#4}}%
\global\long\def\disun{\dot{{}\cup{}}}%
\global\long\def\sep{\mid}%
\global\long\def\al{A_{l}}%
\global\long\def\ar{A_{r}}%

\begin{abstract}
\noindent A regular language $\lan$ is said to be prime, if it is
not the product of two non-trivial languages. Martens et al.\ settled
the exact complexity of deciding primality for deterministic finite
automata in 2010. For finite languages, Mateescu et al.\ and Wieczorek
suspect the $\npcomp ness$ of primality, but no actual bounds are
given. Using the techniques of Martens et al., we prove the $\np$
lower bound and give a $\pit$ upper bound for deciding primality
of finite languages given as deterministic finite automata.
\end{abstract}
\tableofcontents{}

\newpage{}

\section{Introduction}

Coming from number theory, the primality of regular languages is a
quite natural problem. As integers have a unique prime factorisation,
one could hope to decompose languages into indecomposable (and therefore
possibly simpler) languages. Unfortunately the decompositions of languages
do not behave as nicely as those of numbers. A language, if decomposable,
can have different decompositions. Neither the number of prime factors
is unique nor do different decompositions need to have common prime
factors \cite[Section 4]{MSY98}. Therefore the most interesting question
is, whether a language can be decomposed at all, or in other words
whether a language is prime. As in number theory, the complexity of
a primality test (for regular languages) was pinpointed relatively
recently. Martens et al.\ \cite{MNS10} showed that the problem is
$\pspacecomp$. For finite languages in particular, there are pursuits
by Mateescu et al.\ \cite{MSY98} and Wieczorek \cite{Wiec10}, but,
besides an $\npcomp ness$ conjecture, no actual bounds have been
given. Using the ideas of Martens et al., we prove an $\np$ lower
bound and a $\pit$ upper bound for the problem. So again languages
behave way worse then numbers, where primality can be tested in polynomial
time.

In \prettyref{sec:Preliminaries} we establish the notation and give
definitions for the general language theoretical facts we need. In
\prettyref{sec:intoPrim} we give some insight on the necessary properties
for studying primality of regular languages. Those enable the proof
of the $\pit$ upper bound at the end of the section. \prettyref{sec:NPhard}
provides the $\nphard ness$ by establishing a chain of polynomial
time reductions, similar to the one in the proof of Martens, Niewerth
and Schwentick. In the final \prettyref{sec:Final-remarks}, we give
a brief compilation of what is yet to be determined.

\section{Preliminaries\label{sec:Preliminaries}}

In this section we will introduce the basic concepts and notations.
We omit the facts about complexity classes and polynomial time reduction.
For those concepts and definitions we refer to Papadimitriou's book
\cite{Papa94}. First let us fix some general symbols:
\begin{notation*}
$[a,b]:=\{m\in\integ\sep a\leq m\leq b\}$ with $a,b\in\integ$ integers.
$n\integ:=\{n\cdot m\sep m\in\integ\}$ with $n\in\integ$ an integer.\\
For a computational decision problem $\mathrm{PROBLEM}$, $\neg\mathrm{PROBLEM}$
describes the same problem with negated answer.

Now we will introduce the most important concepts about regular languages
and finite automata we use. Since this part is mostly to fix the notation,
we do not give much explanation or motivation and the definitions
might have minor inaccuracies. For a more thorough understanding of
those conceptions we refer to the book of Hopcroft et al.\ \cite{Hopcroft}.
\end{notation*}
\begin{defn}
A (finite) \emph{alphabet} is a finite set $\alph$ of letters. A
\emph{word} $w=a_{1}\ldots a_{n}$ is a finite sequence of letters
$a_{i}\in\alph$ and $|w|=|a_{1}\ldots a_{n}|:=n$ is the \emph{length
of the word}. The \emph{empty word} (of length zero) is written as
$\emptyword$. For two words $v=a_{1}\ldots a_{m}$ and $w=b_{1}\ldots b_{n}$,
$v\circ w:=vw:=a_{1}\ldots a_{m}b_{1}\ldots b_{n}$ describes the
\emph{concatenation} of the two words $v$ and $w$. The \emph{Kleene
closure} of $\alph$ is $\alph^{*}:=\bigcup_{n\geq0}\alph^{n}$ where
$\alph^{n}$ denotes the set of all words over the alphabet $\alph$
with length $n$. Additionally $\alph^{+}:=\bigcup_{n\geq1}\alph^{n}$
is the set of all words with positive length. A \emph{language} $\lan\subseteq\alph^{*}$
is a set of words. A \emph{finite language} is a language containing
only finitely many words. For two languages $\lao$ and $\lat$ over
an alphabet $\alph$, the term $\lao\circ\lat:=\lao\lat:=\{vw\in\alph^{*}\sep v\in\lao\text{ and }w\in\lat\}$
describes the \emph{product} (or concatenation) of the two languages. 
\end{defn}

\begin{defn}[finite automaton]
 A \emph{nondeterministic finite automaton} ($\nfa$) $M$ is a tuple
$(Q,\alph,\delta,I,F)$ where $Q$ is a finite set of states, $\alph$
is a finite alphabet, $\delta\colon Q\times\alph\rightarrow2^{Q}$
is the transition function, $I\subseteq Q$ is the set of initial
states and $F\subseteq Q$ is the set of accepting states. The automaton
is called a \emph{deterministic finite automaton} ($\dfa$) if $|I|=1$
and for all $q\in Q$ and all $a\in\alph$ the inequation $|\delta(q,a)|\leq1$
holds.
\end{defn}

\begin{rem*}
In this thesis, if not explicitly mentioned otherwise, an ``automaton''
is a $\dfa$. \\
We allow $\delta(q,a)=\varnothing$ for $\dfa$s to simplify their
specification. To get a model where $\delta$ is a total function
one only has to add a sink state $g$ such that $\delta(q,a)$ equals
$\{g\}$ instead of $\varnothing$ and $\delta(g,a)=\{g\}$ for all
$a\in\alph$. When a transition function is defined in this paper,
a not considered pair $(q,a)\in Q\times\alph$ means $\delta(q,a)=\varnothing$.
Furthermore, if $\delta(q,a)=\{q'\}$ is a singleton, we write $\delta(q,a)=q'$. 
\end{rem*}
\begin{notation*}
Let $(Q,\alph,\delta,I,F)$ be an $\nfa$, $S\subseteq Q$, $w\in\alph^{*}$
and $a\in\alph$. Then we define
\end{notation*}
\begin{itemize}
\item $\delta(S,a):=\bigcup_{q\in S}\delta(q,a)$ 
\item $\delta(S,w)$ inductively as $\delta(S,aw):=\delta(\delta(S,a),w)$
\\
\phantom{a}\hfill (the states reached from $S$ after reading $w$)
\item $\delta^{*}(S,w)$ inductively as $\delta^{*}(S,aw):=\delta(S,a)\cup\delta(\delta(S,a),w)$
\\
~\phantom{a}\hfill (all states visited from $S$ by reading $w$)
\end{itemize}
If $S=\{q\}$ is a singleton, we write $\delta(q,w)$ and $\delta^{*}(q,w)$.
\begin{defn}
Let $M=(Q,\alph,\delta,I,F)$ be an $\nfa$. \\
The language $L(M):=\{w\in\alph^{*}\sep\delta(I,w)\cap F\neq\varnothing\}$
is the language defined by $M$.\\
A language $\lan\subseteq\alph^{*}$ is called \emph{regular}, if
there is an $\nfa$ $M$ such that $\lan=L(M)$.
\end{defn}

\begin{rem*}
Every regular language $\lan$ has a $\dfa$ $M$ such that $\lan=L(M)$.
\end{rem*}
\begin{cor}
Every finite language is regular.
\end{cor}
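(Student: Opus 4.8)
The plan is to build, for a given finite language $\lan\subseteq\alph^{*}$, an explicit automaton recognising it, rather than appealing to closure of the regular languages under union (which has not yet been established in the excerpt). Since $\lan$ is finite, write $\lan=\{w_{1},\ldots,w_{k}\}$. The crucial finiteness observation is that the set of all prefixes of words in $\lan$,
\[
  P:=\{u\in\alph^{*}\sep u\text{ is a prefix of some }w_{i}\},
\]
is finite, because each $w_{i}$ contributes only $|w_{i}|+1$ prefixes. I will use $P$ itself as the state set, which turns the automaton into a prefix tree.

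Concretely, I would define a $\dfa$ $M=(P,\alph,\delta,\{\emptyword\},\lan)$, taking the set of states to be $P$, the single initial state to be the empty word $\emptyword$ (a prefix of every word, so $\emptyword\in P$ whenever $\lan\neq\varnothing$), the accepting states to be exactly the words of $\lan$ (each of which is a prefix of itself and hence lies in $P$), and the transition function to append one letter as long as the result stays a prefix:
\[
  \delta(u,a):=\begin{cases} ua & \text{if }ua\in P,\\ \varnothing & \text{otherwise.}\end{cases}
\]
This is a well-defined $\dfa$ in the sense of the excerpt's remark, since $|I|=1$ and every $\delta(u,a)$ is either a singleton or empty.

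It then remains to verify $L(M)=\lan$. The key step is the claim that $\delta(\emptyword,w)=\{w\}$ for every $w\in P$ and $\delta(\emptyword,w)=\varnothing$ for every $w\notin P$. This follows by a straightforward induction on $|w|$ from the inductive definition of $\delta$ on words: reading $w=a_{1}\ldots a_{n}\in P$ keeps the computation on the unique branch $\emptyword, a_{1}, a_{1}a_{2},\ldots,w$ (all of which lie in $P$), while the first letter that leaves $P$ sends the computation to $\varnothing$, where it remains. Granting this, a word $w$ is accepted precisely when $w\in P$ and $w\in F=\lan$; since every element of $\lan$ already lies in $P$, this is equivalent to $w\in\lan$.

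I do not expect a genuine obstacle here: the construction is routine and the induction is mechanical. The only points requiring a moment's care are the boundary cases, namely the empty language $\lan=\varnothing$ (handled by any automaton with $F=\varnothing$, for instance a single non-accepting state) and the possibility that $\emptyword\in\lan$ (handled automatically, since then $\emptyword\in F$ and $M$ accepts $\emptyword$).
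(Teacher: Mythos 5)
Your proof is correct, but note that the paper itself gives no proof of this corollary at all: it is stated as folklore, following immediately from the definitions (implicitly, a finite language is a finite union of singleton languages, each recognised by a trivial automaton, with closure under union cited from Hopcroft et al.\ elsewhere in the paper). Your route is more self-contained: you build the prefix-tree $\dfa$ directly, taking the finite, prefix-closed set $P$ of prefixes of words of $\lan$ as the state set, with $\delta(u,a)=ua$ whenever $ua\in P$ and $\delta(u,a)=\varnothing$ otherwise --- a partial transition function that the paper's conventions explicitly permit for $\dfa$s. The induction you sketch goes through precisely because $P$ is prefix-closed (so a word in $P$ never leaves $P$ along the way, and a word outside $P$ is trapped in $\varnothing$ from the first offending letter onward), and you correctly isolate the two boundary cases: $\lan=\varnothing$, where your construction genuinely fails since $P$ contains no state to serve as the initial one and the paper's $\dfa$ definition requires $|I|=1$, and $\emptyword\in\lan$, which needs no special treatment. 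What your approach buys is independence from closure properties not yet established at this point of the paper, and it yields a $\dfa$ directly, matching the input format used throughout; what it costs is only length, for a fact the paper deliberately spends no space on.
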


\section{An introduction to primality of regular languages\label{sec:intoPrim}}

In this section we give the definitions, important properties and
known results about the primality of regular and finite languages.
First of we start with a definition of primality.
\begin{defn}[Primality]
A regular language $\lan\subseteq\alph^{*}$ is called \emph{decomposable},
if there are languages $\lao,\lat\subseteq\alph^{*}$, $\lao\neq\{\emptyword\}\neq\lat$
such that $\lan=\lao\circ\lat$. If $\lan$ is not decomposable it
is called \emph{prime}. 
\end{defn}

\begin{rem*}
As we see in \prettyref{thm:partitionSets}, it makes no difference
whether we require $\lao$ and $\lat$ to be regular languages.
\end{rem*}
The definition adverts the following decision problem:

\begin{problem}{$\primreg$}
A regular language $\lan$ over a finite alphabet $\alph$ given as a $\dfa$

Question: Is $\lan$ prime
\end{problem}The exact complexity of this problem was determined relatively recently:
\begin{thm}[{\cite[Corollary 6.10]{MNS10}}]
$\primreg$ is $\pspacecomp$.
\end{thm}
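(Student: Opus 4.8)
The plan is to establish the two bounds separately: membership in $\pspacecomp$ and $\pspacecomp$-hardness. Since $\pspacecomp$ is closed under complement, it is equivalent (and more convenient) to place the decomposability problem $\neg\primreg$ in $\pspacecomp$ and to reduce some $\pspacecomp$-complete problem to it.

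For the upper bound, fix a $\dfa$ $M=(Q,\alph,\delta,\{q_0\},F)$ with $\lan=L(M)$, and for $q\in Q$ write $L_q:=\{w\in\alph^{*}\sep\delta(q,w)\cap F\neq\varnothing\}$ for the language accepted from $q$, so that the left quotient satisfies $u^{-1}\lan=L_{\delta(q_0,u)}$. The key observation is that a decomposition may always be taken in a canonical form indexed by a set of states. Given any nontrivial $\lan=\lao\lat$, put $\part:=\{\delta(q_0,u)\sep u\in\lao\}\subseteq Q$ and define $\lat':=\bigcap_{q\in\part}L_q$ together with $\lao':=\{u\sep u\lat'\subseteq\lan\}$. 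One checks $\lat\subseteq\lat'$ (as every $u\in\lao$ with $\delta(q_0,u)=q$ forces $\lat\subseteq L_q$) and $\lao\subseteq\lao'$, whence $\lan=\lao\lat\subseteq\lao'\lat'\subseteq\lan$, so $\lan=\lao'\lat'$ is again a nontrivial decomposition, now determined solely by $\part$. Consequently $\lan$ is decomposable iff some $\part\subseteq Q$ induces a nontrivial factorisation $(\lao',\lat')$ of $\lan$.

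This characterisation is checkable in polynomial space. The set $\part$ is guessed with at most $|Q|$ bits. The factor $\lat'=\bigcap_{q\in\part}L_q$ is recognised by a product automaton whose states are tuples in $Q^{\part}$; these are exponentially many but each has polynomial size, so $\lat'$ is available as a succinctly represented automaton on which emptiness and inclusion are $\pspacecomp$ tests performed on the fly. Membership $u\in\lao'$ depends only on $\delta(q_0,u)$, since $u\lat'\subseteq\lan$ holds iff $L_{\delta(q_0,u)}\supseteq\lat'$, an inclusion of (succinctly given) regular languages decidable in $\pspacecomp$; hence $\lao'$ is recognised by $M$ with an accepting-state set $R\subseteq Q$ computable in polynomial space. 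Finally $\lao'\lat'=\lan$ and the nontriviality conditions $\lao'\neq\{\emptyword\}\neq\lat'$ are equivalence and emptiness questions for automata all of whose states have polynomial size, again resolved in $\pspacecomp$. Thus $\neg\primreg\in\mathrm{NPSPACE}=\pspacecomp$, and by closure under complement $\primreg\in\pspacecomp$.

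For the lower bound I would reduce a canonical $\pspacecomp$-complete problem---say acceptance of a fixed polynomially space-bounded Turing machine, equivalently a corridor-tiling problem---to decomposability $\neg\primreg$. In polynomial time one builds a $\dfa$ $M$ over an alphabet of configuration and separator symbols so that a nontrivial factorisation $L(M)=\lao\lat$ can arise only when the split point $uv$ of each accepted word lands on a configuration boundary, and so that consistency of $\lao$ with $\lat$ across that boundary forces $u$ and $v$ to spell out successive, correctly updated configurations of an accepting run; the construction is arranged so that $L(M)$ is decomposable exactly when the machine accepts, and then primality decides the complementary (still $\pspacecomp$-complete) instance. The \emph{main obstacle} is precisely this gadget design: the target must remain a polynomial-size $\dfa$, so no determinisation blow-up is permitted, yet its concatenation structure has to encode both the space and the time of the computation, and one must simultaneously rule out all trivial or spurious factorisations that do not correspond to a genuine run. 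Forcing the cut position, enforcing the local transition constraints through the interaction of $\lao$ and $\lat$, and blocking degenerate decompositions is where the real work lies; the upper-bound half, by contrast, is essentially the quotient bookkeeping above.
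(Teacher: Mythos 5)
A preliminary remark for orientation: the paper does not prove this statement at all --- it is imported from \cite[Corollary 6.10]{MNS10} --- and the only ingredient of your argument that the paper itself develops is the partition-set normal form (\prettyref{thm:partitionSets}, due to Mateescu et al.), which coincides with your canonical decomposition $(\lao',\lat')$ indexed by $\part\subseteq Q$. That part of your proposal is sound. The genuine gap is in the upper bound, at the step where you assert that $\lao'\lat'=\lan$ is an ``equivalence question for automata all of whose states have polynomial size, again resolved in $\pspacecomp$''. There is no such general principle: for succinctly represented automata with exponentially many states (each of polynomial description size), inclusion and equivalence are $\mathrm{EXPSPACE}$-complete in general; small state descriptions do not make the subset construction small. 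Concretely, the harmless direction is $\lao'\lat'\subseteq\lan$ (it holds by the definition of $\lao'$, cf.\ \prettyref{lem:checkLinL1L2}); the crux is $\lan\subseteq\lao'\lat'$. A counterexample to it is a word $w\in\lan$ such that \emph{every} split $w=uv$ with $\delta(q_0,u)\in\part$ has some $p\in\part$ with $\delta(p,v)\notin F$. Verifying this while guessing $w$ letter by letter, left to right, forces you to carry one ``suffix thread'' $(\delta(p,v))_{p\in\part}\in Q^{\part}$ for every split position seen so far, i.e.\ a subset of an exponential-size set --- exponential space. To stay in polynomial space one needs an extra idea that exploits the fact that all these languages live inside the single $\dfa$ $M$; for instance: guess the counterexample from right to left, maintaining only the suffix profile $q\mapsto\delta(q,v)$ together with a forbidden set $C\subseteq Q$ of prefix states, where each split position whose suffix thread would be accepting contributes the constraint $\delta(q_0,u)\notin\part$, and $C$ is propagated backwards through each newly guessed letter $a$ by taking the preimage under $\delta(\cdot,a)$; at the end one accepts if $\delta(q_0,w)\in F$ and $q_0\notin C$. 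Without something of this kind, your argument as written yields only an $\mathrm{EXPSPACE}$ upper bound, not membership in $\mathrm{PSPACE}$.

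The hardness half of your proposal is not a proof: you state the intended reduction (acceptance of a space-bounded machine, equivalently corridor tiling, encoded into the factorisations of a polynomial-size $\dfa$) and then explicitly defer the gadget construction as ``where the real work lies''. That construction is the entire substance of the hardness proof in \cite{MNS10} --- a chain of reductions through tiling and concatenation-equivalence problems, which this paper mirrors in \prettyref{sec:NPhard} for the finite case --- and none of it is supplied. So the lower bound must be counted as missing, and the upper bound has the gap described above.
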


For finite languages the exact complexity of the problem is not yet
known. To the best of our knowledge, the $\nphard ness$, which we
prove in \prettyref{thm:PrimNPhard}, was not known before. Let us
start with a definition of the problem.

\begin{problem}{$\primfin$}
A finite language $\lan$ over a finite alphabet $\alph$ given as a $\dfa$

Question: Is $\lan$ prime
\end{problem}

The problem was examined before: The paper of Mateescu et al.\ \cite{MSY98}
establishes some notions, gives general results and treats examples.
They suspect $\npcomp ness$ for $\primfin$, but only give a double
exponential algorithm \cite[Theorem 3.1 and below]{MSY98}. A less
theoretical approach takes Wieczorek \cite{Wiec10}, as he offers
an optimised deterministic algorithm for a finite language given as
a list. If the finite language is given as a list of words, the primality
problem is obviously in $\conp$: One guesses a partition in two parts
for every word and checks whether all combinations of a first part
of one and a second part of another word are again in the given language.
As the description of a finite language as a list can be exponentially
larger than the corresponding $\dfa$ (for instance the language of
all words of a specific length), the algorithm is not useful for our
problem.

To check for primality of a language $\lan$, one has to consider
if there are languages $\lao$ and $\lat$ that decompose $\lan=\lao\lat$.
Because we have to work with the $\dfa$ of $\lan$, we should examine
the states in which the words get actually split. That leads to the
following definition and results:
\begin{defn}
Let $\lan$ be a regular language, given as a $\dfa$ $M=(Q,\alph,\delta,\{s\},F)$
and $\part\subseteq Q$ a set of states. We call $\part$ a \emph{partition
set} and define the regular languages
\[
\parlan 1{\part}:=\{w\in\alph^{*}\sep\delta(s,w)\in\part\}
\]
 and 
\[
\parlan 2{\part}:=\bigcap_{p\in P}\{w\in\alph^{*}\sep\delta(p,w)\in F\}.
\]
\end{defn}

\begin{rem*}
The languages $\parlan 1{\part}$ and $\parlan 2{\part}$ are regular
because $(Q,\alph,\delta,\{s\},\part)$ is an automaton for $\parlan 1{\part}$
and $(Q,\alph,\delta,\{p\},F)$ is an automaton for $\{w\in\alph^{*}\sep\delta(p,w)\in F\}$
and an intersection of regular languages is regular again \cite[Section 4.2]{Hopcroft}. 
\end{rem*}
\begin{lem}
\label{lem:checkLinL1L2}Let $\lan$ be a regular language given as
a $\dfa$ $M=(Q,\alph,\delta,\{s\},F)$ and let $\part\subseteq Q$
be any subset, then $\parlan 1{\part}\parlan 2{\part}\subseteq\lan$.
\end{lem}

\begin{proof}
Let $w_{1}w_{2}\in\parlan 1{\part}\parlan 2{\part}$ with $w_{i}\in\parlan i{\part}$,
then $\delta(s,w_{1})\in\part$ by the definition of $\parlan 1{\part}$
and therefore $\delta(s,w_{1}w_{2})=\delta(\delta(s,w_{1}),w_{2})\in F$
by the definition of $\parlan 2{\part}$.
\end{proof}
\begin{thm}[{\cite[Lemma 3.1]{MSY98}}]
\label{thm:partitionSets}Let $\lan$ be a regular language, given
as a $\dfa$ $M=(Q,\alph,\delta,\{s\},F)$, let $\lan=\lao\lat$ be
a decomposition of $\lan$ and let 
\[
\part:=\{q\in Q\sep q=\delta(s,w)\text{ for some }w\in\lao\}
\]
 be the set of ``border''-states. Then $\lao\subseteq\parlan 1{\part}$,
$\lat\subseteq\parlan 2{\part}$ and 
\[
L=\parlan 1{\part}\parlan 2{\part}
\]
 is the decomposition of $\lan$ into two regular languages.
\end{thm}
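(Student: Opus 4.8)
The plan is to verify the two inclusions and the product equation separately, as each follows fairly directly from the definitions of $\parlan 1{\part}$, $\parlan 2{\part}$ and of the border set $\part$, together with \prettyref{lem:checkLinL1L2}.

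The inclusion $\lao\subseteq\parlan 1{\part}$ is immediate: for any $w\in\lao$ the state $\delta(s,w)$ lies in $\part$ by the definition of $\part$, hence $w\in\parlan 1{\part}$ by the definition of $\parlan 1{\part}$.

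The step $\lat\subseteq\parlan 2{\part}$ is the one that actually invokes the hypothesis $\lan=\lao\lat$, and I regard it as the crux, although it too is short. Fix $w_{2}\in\lat$; I must show $\delta(p,w_{2})\in F$ for every $p\in\part$. Given such a $p$, the definition of $\part$ supplies a word $w_{1}\in\lao$ with $p=\delta(s,w_{1})$. As $w_{1}\in\lao$ and $w_{2}\in\lat$, the concatenation $w_{1}w_{2}$ belongs to $\lao\lat=\lan$, whence $\delta(s,w_{1}w_{2})\in F$. The inductive behaviour of $\delta$ on words gives $\delta(s,w_{1}w_{2})=\delta(\delta(s,w_{1}),w_{2})=\delta(p,w_{2})$, so $\delta(p,w_{2})\in F$. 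Since $p\in\part$ was arbitrary, $w_{2}\in\parlan 2{\part}$.

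Finally I would assemble the equality $\lan=\parlan 1{\part}\parlan 2{\part}$. The inclusion $\parlan 1{\part}\parlan 2{\part}\subseteq\lan$ is exactly \prettyref{lem:checkLinL1L2}. For the reverse inclusion, the two containments just proved, together with the monotonicity of the language product under inclusion, give $\lan=\lao\lat\subseteq\parlan 1{\part}\parlan 2{\part}$; the two inclusions combine to the desired equality. Both factors are regular by the remark preceding the statement, so $\lan$ is exhibited as a product of two regular languages, and a routine check using the inclusions $\lao\subseteq\parlan 1{\part}$ and $\lat\subseteq\parlan 2{\part}$ together with $\lao\neq\{\emptyword\}\neq\lat$ confirms that neither factor reduces to $\{\emptyword\}$, so this is genuinely a decomposition. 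No step here poses a real obstacle; the only place demanding any thought is the use of the definition of $\part$ in the middle inclusion.
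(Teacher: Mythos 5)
Your proposal is correct and follows essentially the same route as the paper: the first inclusion from the definition of $\part$, the second by picking a witness $w_{1}\in\lao$ with $\delta(s,w_{1})=p$ and using $\lan=\lao\lat$ (you argue this directly where the paper phrases it as a contradiction, which is an immaterial difference), and the equality from monotonicity of the product plus \prettyref{lem:checkLinL1L2}. Your added check that the resulting factors are non-trivial is a harmless extra the paper omits.
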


\begin{proof}
$\lao\subseteq\parlan 1{\part}$: Let $w\in\lao$, then $\delta(s,w)\in\part$
and therefore $w\in\parlan 1{\part}$.\\
$\lat\subseteq\parlan 2{\part}$: Suppose $w\in\lat\setminus\parlan 2{\part}$,
that means $w\in\lat$ and there is a $p\in\part$ such that $\delta(p,w)\notin F$.
Let $v\in\lao$ such that $\delta(s,v)=p$. Then $vw\notin\lan$,
because $\delta(s,vw)=\delta(\delta(s,v),w)=\delta(p,w)\notin F$,
but at the same time $vw\in\lao\lat=L$. That contradicts the existence
of $w\in\lat\setminus\parlan 2{\part}$.\\
$L=\parlan 1{\part}\parlan 2{\part}$: The inclusion $\lan\subseteq\parlan 1{\part}\parlan 2{\part}$
follows directly from $L=\lao\lat$ and $L_{i}\subseteq\parlan i{\part}$
for $i\in\{1,2\}$.%
\begin{comment}
if \textbackslash lan changes, reevaluate
\end{comment}
{} The other inclusion was given in \prettyref{lem:checkLinL1L2}.
\end{proof}
The theorem enables us to limit our search for decompositions to the
ones that arise from this construction. The problem is, after guessing
a partition set $\part$, to actually check whether $\lan\subseteq\parlan 1{\part}\parlan 2{\part}$.
Unfortunately the intersection of $\on n$ sets and the concatenation
of two languages is not efficient, as both can lead to an exponential
blow-up of the number of states.

We do not use the following theorem from Wieczorek \cite{Wiec10},
which is included for readers interested in further research. It allows
to reduce the states that have to be considered for $\part$, but
a reduction beyond $\on n$ is neither obvious nor likely.
\begin{thm}[{\cite[Theorem 3]{Wiec10}}]
 Let $\lan$ be a decomposable finite language with a minimal $\dfa$
$M=(Q,\alph,\delta,\{s\},F)$. Then there is a partition set $\part$
with $\lan=\parlan 1{\part}\parlan 2{\part}$ such that for all $p\in\part$
either $|\{a\in\alph\sep\delta(p,a)\}|>1$ or $(p\in F)\wedge(\exists w\in\alph^{*}\colon\delta(p,w)\in F)$
holds.
\end{thm}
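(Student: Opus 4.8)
The plan is to start from an arbitrary decomposition and repeatedly \emph{push the splitting point forward} past any border state that is neither branching nor accepting, until no such state remains. Concretely, suppose $\lan=\lao\lat$ is a decomposition and let $\part:=\{\delta(s,w)\sep w\in\lao\}$ be the associated border set, so that $\lan=\parlan 1{\part}\parlan 2{\part}$ by \prettyref{thm:partitionSets}. I call a state $p\in\part$ \emph{bad} if $p\notin F$ and $p$ has at most one outgoing transition. Since $\lan\neq\varnothing$ forces $\parlan 2{\part}\neq\varnothing$, every $p\in\part$ can reach $F$; hence a bad state has in fact exactly one outgoing transition $\delta(p,a)=p'$, with $p'$ able to reach $F$. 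If $\part$ has no bad state we are already done, because then every $p\in\part$ is either branching, satisfying the first clause, or accepting, satisfying the second clause with $w=\emptyword$.

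The key step is to eliminate a bad state $p$ with forced letter $a$. Because $p\notin F$ and $a$ is the only available letter, every accepting path from $p$ begins with $a$, i.e.\ $\{w\sep\delta(p,w)\in F\}\subseteq a\alph^*$. As \prettyref{thm:partitionSets} gives $\lat\subseteq\parlan 2{\part}\subseteq\{w\sep\delta(p,w)\in F\}$, I conclude that \emph{every} word of $\lat$ starts with $a$, so I may factor $\lat=a\lat'$ with $\lat':=\{v\sep av\in\lat\}$. Then, purely at the level of languages, $\lan=\lao\lat=(\lao a)\lat'$, which is again a decomposition of $\lan$. Applying \prettyref{thm:partitionSets} to this new decomposition produces a border set $\part'$ together with the identity $\lan=\parlan 1{\part'}\parlan 2{\part'}$; one checks that $\part'=\{\delta(p,a)\sep p\in\part\}$ is just the old border set shifted one $a$-step forward, each old border state admitting an $a$-transition precisely because the non-empty $\lat$ lies in $a\alph^*$.

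I expect the main obstacle to be exactly this preservation of the product under the modification. A naive attempt to simply replace the single state $p$ by its successor $p'$ inside $\part$ does \emph{not} in general keep $\parlan 1{\part}\parlan 2{\part}$ equal to $\lan$, because of interactions with the remaining border states; small examples show the two products genuinely differ. The clean way around this is to never manipulate the state set directly but to perform the move on the decomposition itself, $(\lao,\lat)\mapsto(\lao a,\lat')$, where the factorisation $\lat=a\lat'$ is an elementary language identity and therefore automatically product-preserving; the border set is then recovered from \prettyref{thm:partitionSets} rather than guessed.

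Finally, termination uses finiteness of $\lan$. Every word of $\lao$ is a prefix of a word of $\lan$ (append any word of the non-empty $\lat$), so its length is bounded by $N:=\max_{x\in\lan}|x|$. Each push appends a letter to all words of $\lao$, strictly increasing the minimal word length occurring in the first factor, which can happen at most $N+1$ times. Hence after finitely many pushes the border set contains no bad state, and that final border set is the partition set $\part$ required by the theorem. The degenerate case $\lan=\varnothing$, where the empty border set satisfies the condition vacuously, and the possibility that a push renders the decomposition trivial, where the resulting border set lands entirely in $F$ and every clause is met via $w=\emptyword$, are both subsumed by this argument.
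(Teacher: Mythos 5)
There is nothing in the paper to compare your argument against: the theorem is quoted from Wieczorek \cite{Wiec10}, and the text explicitly states that it is not used and gives no proof. Judged on its own, your proof is correct for the statement as printed. The forced-letter observation (a border state $p\notin F$ with a single outgoing letter $a$ gives $\lat\subseteq\parlan 2{\part}\subseteq\{w\in\alph^{*}\sep\delta(p,w)\in F\}\subseteq a\alph^{*}$), the decision to push on the factorisation $(\lao,\lat)\mapsto(\lao a,\lat')$ instead of editing $\part$ by hand, and the termination bound via the length of a longest word of the finite language $\lan$ are all sound. Two small points deserve a sentence each: \prettyref{thm:partitionSets} is stated for decompositions, i.e.\ both factors different from $\{\emptyword\}$, while your pushes may produce $\lat'=\{\emptyword\}$ --- harmless, since the proof of \prettyref{thm:partitionSets} never uses non-triviality, but it should be said; and minimality of $M$ is never used, which is legitimate but is a hint of what follows.

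What you missed is that the statement as printed is trivial, so the whole machinery is unnecessary. Since the second clause only asks for some $w\in\alph^{*}$, every $p\in F$ satisfies it with $w=\emptyword$; the clause is just ``$p\in F$''. Now take $\part:=F$: then $\parlan 1{\part}=\lan$ and $\emptyword\in\parlan 2{\part}$, so $\lan\subseteq\parlan 1{\part}\parlan 2{\part}$, and the reverse inclusion is \prettyref{lem:checkLinL1L2}; hence $\lan=\parlan 1{\part}\parlan 2{\part}$ and every state of $\part$ satisfies the second clause. This one-line proof needs neither decomposability, nor minimality, nor finiteness --- which shows that the transcription in this paper has lost the content of Wieczorek's result (presumably $w\in\alph^{+}$ is intended, for otherwise the theorem does not restrict $\part$ at all). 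Your push-forward argument is exactly the natural attack on that stronger reading: at termination, any non-empty word of the current second factor witnesses the clause for every border state simultaneously. But it genuinely breaks in the degenerate case where the pushes consume the second factor entirely, and this failure is not repairable: for $\lan=\{ab\}$ the only partition set giving a non-trivial decomposition is the middle state of the minimal $\dfa$, which is neither branching nor accepting, so the strengthened statement is false without further qualification. In short: correct proof, but of a statement that is trivially true, and the interesting version it gestures at needs a more careful formulation than either you or the paper provides.
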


Unfortunately we did not close the gap between the $\np$ lower and
the $\pit$ upper bound. But let us at least provide a proof for the
$\pit$ upper bound:
\begin{prop}
$\primfin$ is in $\pit$.
\end{prop}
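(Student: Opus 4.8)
The plan is to prove the equivalent statement that the complementary problem $\neg\primfin$, i.e.\ deciding whether $\lan$ is \emph{decomposable}, lies in $\sigt$; as $\pit=\mathrm{co}\,\sigt$ this gives $\primfin\in\pit$. Everything hinges on combining \prettyref{lem:checkLinL1L2} and \prettyref{thm:partitionSets}: the lemma yields $\parlan 1{\part}\parlan 2{\part}\subseteq\lan$ for \emph{every} $\part\subseteq Q$, while the theorem guarantees that any decomposition of $\lan$ is already realised by the partition set of its border states. Thus, for $\lan\neq\varnothing$, the language $\lan$ is decomposable if and only if there is a partition set $\part\subseteq Q$ with $\lan\subseteq\parlan 1{\part}\parlan 2{\part}$ (equality then being automatic) whose two factors each contain a nonempty word, the latter being exactly the condition $\parlan 1{\part}\neq\{\emptyword\}\neq\parlan 2{\part}$ for a nonempty product.

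First I would establish a length bound. Since $\lan$ is finite and recognised by $M$ with state set $Q$, a standard pumping argument shows that every word of $\lan$ is shorter than $|Q|$. Moreover, if some $\part$ satisfies $\lan=\parlan 1{\part}\parlan 2{\part}$ with both factors nonempty, then every word $v\in\parlan 2{\part}$ is a suffix of $uv\in\lan$ for any $u\in\parlan 1{\part}$, so $v$, and symmetrically every word of $\parlan 1{\part}$, is again shorter than $|Q|$. Hence nontriviality of such a decomposition can always be certified by \emph{short} witnesses: a nonempty word of length $<|Q|$ in each factor.

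This gives the $\exists\forall$-algorithm for $\neg\primfin$. At the existential level I would guess a partition set $\part\subseteq Q$ together with two nonempty words $u,v$ of length $<|Q|$; these have polynomial size. The matrix $\varphi$ then asserts, for a universally quantified word $w$ of length $<|Q|$, that the guess is a valid certificate and that $w$ is consistent with the decomposition. Concretely $\varphi$ checks in polynomial time that $u\in\parlan 1{\part}$ (run $M$ from $s$ on $u$, test membership in $\part$), that $v\in\parlan 2{\part}$ (test $\delta(p,v)\in F$ for every $p\in\part$), and that $w\notin\lan$ or $w\in\parlan 1{\part}\parlan 2{\part}$, the last membership being decided by trying all $|w|+1$ factorisations $w=w_{1}w_{2}$ and testing $w_{1}\in\parlan 1{\part}$ and $w_{2}\in\parlan 2{\part}$. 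Since words of length $\geq|Q|$ never lie in $\lan$, quantifying $w$ only over words shorter than $|Q|$ faithfully encodes $\lan\subseteq\parlan 1{\part}\parlan 2{\part}$, and the whole predicate has the form $\exists c\,\forall w\,\varphi(c,w)$ with $\varphi$ polynomial-time decidable, placing $\neg\primfin$ in $\sigt$. Correctness is the equivalence above: a passing certificate forces $\lan=\parlan 1{\part}\parlan 2{\part}$ with nonempty $u\in\parlan 1{\part}$ and $v\in\parlan 2{\part}$, a nontrivial decomposition; conversely a decomposition induces via \prettyref{thm:partitionSets} a border set $\part$ and, by the length bound, short nonempty witnesses, so a certificate exists.

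The main obstacle I anticipate is keeping the treatment of the second factor polynomial. The language $\parlan 2{\part}$ is an intersection of $|\part|\in\on{|Q|}$ automaton languages, so its product automaton is of exponential size, and testing $\parlan 2{\part}\neq\{\emptyword\}$ in isolation could demand an exponentially long witness. The finiteness of $\lan$ is precisely what defuses this: under the hypothesis $\lan=\parlan 1{\part}\parlan 2{\part}$ all words of $\parlan 2{\part}$ are short, so guessing the witness $v$ at the existential level avoids ever building the product automaton. Finally I would handle the degenerate inputs $\lan=\varnothing$ (decomposable, as $\varnothing=\varnothing\varnothing$) and $\lan=\{\emptyword\}$ by a direct check, since the witness-based procedure above tacitly assumes $\lan\neq\varnothing$.
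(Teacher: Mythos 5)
Your proof is correct, and at its core it is the same argument as the paper's: you place $\neg\primfin$ in $\sigt$ by existentially guessing a partition set $\part\subseteq Q$, universally quantifying over polynomially short words $w$, and deciding $w\in\parlan 1{\part}\parlan 2{\part}$ in polynomial time, with soundness given by \prettyref{lem:checkLinL1L2}, completeness by \prettyref{thm:partitionSets}, and polynomial balance by the acyclicity of a $\dfa$ accepting a finite language. The genuine difference lies in your extra existential witnesses: two nonempty words $u\in\parlan 1{\part}$ and $v\in\parlan 2{\part}$ of length $<|Q|$ certifying non-triviality, plus the separate treatment of $\lan=\varnothing$. This is not pedantry; it repairs a gap in the paper's own proof. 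The paper's characterisation never checks $\parlan 1{\part}\neq\{\emptyword\}\neq\parlan 2{\part}$, and as literally written it is satisfied by the trivial choice $\part=\{s\}$, for which $\emptyword\in\parlan 1{\part}$ and $\parlan 2{\part}=\lan$, so every language would pass the universal test and be declared decomposable. Your length-bound observation (any word of either factor is a prefix, respectively suffix, of a word of $\lan$, hence shorter than $|Q|$) is exactly what makes these witnesses polynomially bounded and thus admissible at the existential level; and the special case $\lan=\varnothing$ is indeed needed, since $\varnothing$ is decomposable by the paper's definition yet admits no non-triviality witness. A further small improvement: your membership test for $w\in\parlan 1{\part}\parlan 2{\part}$ by enumerating all $|w|+1$ splits is stated correctly, whereas the paper's test via $\part_{w}:=\delta^{*}(s,w)\cap\part$ only verifies acceptance from the states of $\part$ visited on $w$'s run, rather than from all of $\part$ as the definition of $\parlan 2{\part}$ requires. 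In short: same route, but your version is the watertight one.
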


\begin{proof}
The definitions for the polynomial hierarchy can be found in Papadimitriou's
book \cite[Section 17.2]{Papa94}. We will argue that $\neg\primfin$
is in $\Sigma_{2}^{\mathrm{P}}$ by the characterisation of \cite[Chapter 17, Corollary 2]{Papa94}:
\begin{multline*}
\neg\primfin=\\
\{\lan=L(Q,\alph,\delta,I,F)\mid\exists\part\subseteq Q\ \forall w\in L\colon(L,P,w)\in R\,:\Longleftrightarrow\,w\in\parlan 1{\part}\parlan 2{\part}\}
\end{multline*}
Using \prettyref{thm:partitionSets} and \prettyref{lem:checkLinL1L2},
the right side is a characterisation of $\neg\primfin$. We have to
check that the relation $R$ is polynomial-time decidable and is polynomially
balanced. For a finite language $\lan$ let $M=(Q,\alph,\delta,\{s\},F)$
be the $\dfa$ of $\lan$ and $n$ its size. The relation is polynomial-time
decidable: One simulates $M$ on the input $w$ and stores the set
$\part_{w}:=\delta^{*}(s,w)\cap\part$ and the remaining characters
of $w$ (when reaching $p\in\part_{w}$) in $W\subseteq\alph^{*}$.
If $\part_{w}=\varnothing$, we reject. Otherwise we simulate for
all $v\in W$ and all $p\in\part_{w}$ the automaton $M_{p}:=(Q,\alph,\delta,\{p\},F)$
on $v$. If there is at least one $v$ such that $v\in L(M_{p})$
for all $p\in\part_{w}$, we accept or else we reject. So the test
takes at most time $\on{n+n\cdot n}$. \\
The relation is polynomially balanced as well since the partition
set has at most $n$ elements and $w$ has length at most $n-1$ ($M$
is acyclic since the language is finite).
\end{proof}

\section{$\protect\nphard ness$ of $\protect\primfin$\label{sec:NPhard}}

In this chapter we proof the following main theorem of the paper:
\begin{thm}
\label{thm:PrimNPhard}$\primfin$ is $\nphard$ (for languages given
as $\dfa$s).
\end{thm}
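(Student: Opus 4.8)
The plan is to establish $\nphard ness$ by exhibiting a polynomial-time reduction from a known $\nphard$ problem to $\primfin$. Since the authors explicitly mention following the ideas of Martens, Niewerth and Schwentick and set up intermediate problems in their notation (the macros $\sqtil$, $\sqtile$, and $\coneq$ hint at this), I would expect the reduction to proceed through a chain of intermediate problems rather than a single direct encoding. A natural starting point is some variant of a tiling problem, such as a square-tiling problem with edge or relation constraints, which is classically $\npcomp$. The strategy is then to encode an instance of the tiling problem as a finite language $\lan$ over a suitable alphabet so that $\lan$ is \emph{decomposable} precisely when the tiling instance is \emph{unsolvable} (or vice versa, taking care of the negation so the reduction lands on the correct side).

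First I would fix the source problem and verify its $\nphard ness$ (or $\conphard ness$, depending on orientation), then design the encoding. Given a tiling instance, I would build a finite language whose words correspond to candidate tilings or to rows/columns of the grid, together with auxiliary words that force the structure of any admissible decomposition. The key design goal is to arrange the alphabet and the words so that, by \prettyref{thm:partitionSets}, any decomposition $\lan=\parlan 1{\part}\parlan 2{\part}$ must split every word at a controlled position; the border-state set $\part$ then corresponds to a combinatorial object (a tiling, or a consistent choice) in the instance. Concretely, I would ensure that the ``first parts'' $\parlan 1{\part}$ encode one coordinate or one half of each constraint and the ``second parts'' $\parlan 2{\part}$ encode the complementary half, so that the containment $\parlan 1{\part}\parlan 2{\part}\subseteq\lan$ forces local compatibility everywhere, exactly mirroring the tiling's adjacency conditions.

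The next step is to give the finite language as a $\dfa$ of size polynomial in the instance, which is essential because the problem takes a $\dfa$ as input and a word-list description could be exponentially larger. I would describe the $\dfa$ explicitly using the paper's conventions (partial transition function, implicit sink state) so that its state count and transitions are linear or polynomial in the tiling parameters. Then I would prove correctness in two directions: if the instance is solvable, exhibit the partition set $\part$ and check $\lan=\parlan 1{\part}\parlan 2{\part}$ using \prettyref{lem:checkLinL1L2} for one inclusion and the solution's consistency for the other; if $\lan$ is decomposable, invoke \prettyref{thm:partitionSets} to extract a border-state set and decode it into a valid tiling.

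The main obstacle will be the forcing argument, that is, ruling out \emph{spurious} decompositions. It is easy to build in the intended decomposition, but one must ensure that \emph{every} decomposition arises only from a genuine solution, since otherwise $\lan$ could be prime even when the instance is solvable, or decomposable for trivial reasons unrelated to the tiling. This typically requires padding or marker letters that pin down the split point of each word and eliminate degenerate factorizations (for instance decompositions where one factor is essentially $\{\emptyword\}$ up to the triviality constraint, or where words split at unintended positions). Getting these gadgets right while keeping the $\dfa$ polynomial in size, and verifying that the resulting $\parlan 2{\part}$ intersection behaves as intended, is where the bulk of the careful work lies.
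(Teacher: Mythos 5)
There is a genuine gap, and it is the \emph{orientation} of your reduction. In your correctness plan you commit to proving that the tiling instance is solvable if and only if $\lan$ is decomposable: you exhibit a partition set $\part$ from a solution, and you decode a valid tiling from any decomposition. That biconditional is a polynomial reduction from the $\npcomp$ tiling problem to $\neg\primfin$, so it would prove that $\neg\primfin$ is $\nphard$, i.e.\ that $\primfin$ is $\conphard$ --- not the claimed $\nphard ness$ (these differ unless $\np=\conp$). Your opening paragraph actually states the correct polarity (``decomposable precisely when the tiling instance is unsolvable''), but the hedge ``or vice versa'' is not available: flipping the biconditional changes which class you lower-bound, and the detailed plan you then give has it backwards. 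This is not a bookkeeping issue but a gadget-design issue. Encoding the tiling in the partition set uses the existential quantifier ``$\exists\part$'' of decomposability (the $\sigt$ side) to simulate ``$\exists$ tiling''; for $\nphard ness$ of $\primfin$, a solvable instance must instead map to a \emph{prime} language, so the tiling must be encoded as a witness that defeats \emph{every} decomposition (the $\pit$ side). In the paper, a legal tiling becomes a correctly numbered word of $\lan$ that cannot be split as $\lao\lat$, thereby destroying the decomposition.

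Second, the mechanism that makes this work is missing from your sketch. Since primality quantifies universally over all decompositions, the reduction must control every factorisation, and the paper does this in two concrete steps that you only gesture at: (i) it routes through $\coneq$, working over the numbered alphabet $\til\times[1,n^{2}]$ so that every letter carries its intended position; this is what lets $\lao$, $\lat$ and $\lan:=\lao\lat\cup\{\text{correctly numbered words}\}$ be given by polynomial-size $\dfa$s and makes $\lan=\lao\lat$ equivalent to the nonexistence of a legal tiling (\prettyref{lem:concatSqrTilconstructedLanguages}); (ii) it reduces $\coneq$ to $\neg\primfin$ by the marker construction $A:=\lan\cup\lao\$\lat'\cup\lao'\$\lat\cup\lao'\$\$\lat'$ over a disjoint copy $\alph'$ of the alphabet, and proves (\prettyref{lem:Adecomposition}) that $A$ is either prime or its only non-trivial decomposition is $(\lao\cup\lao'\$)\circ(\lat\cup\$\lat')$. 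That lemma --- argued directly on the structure of the words, not via \prettyref{thm:partitionSets} --- is exactly the ``forcing argument'' you correctly identify as the crux but leave unspecified, and it is what collapses the universal quantification over decompositions to the single equality test $\lan=\lao\lat$. Without the orientation fix and without this lemma (or a substitute for it), the proposal does not yield the theorem.
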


We will start with the $\npcomp$ problem $\sqtile$ and build the
following chain of polynomial reductions:
\begin{multline*}
\mathrm{\np}\leq\sqtile\leq\sqtil\leq\\
\neg\coneq\leq\primfin
\end{multline*}

The chain is actually quite similar to the one in the work of Martens
et al.\ \cite[Sections 5.2 and 6.2]{MNS10}. They reference a different
form of tiling and use a special case of concatenation equivalence.

\subsection{From $\protect\sqtile$ to $\protect\sqtil$}

We start with a tiling problem whose complexity is stated in the book
of Garey and Johnson \cite{GJ79}. Then we will adapt the problem
to a better fitting variant.

\begin{problem}{$\sqtile$}
A set of colours $C$, a set of tiles $\mathcal{T} \subseteq C^4$ and a natural number ${n \leq |C|}$; \\
A tile $\tile{a}{b}{c}{d}:=(a,b,c,d)\in \mathcal{T}$ has four edges with corresponding colours

Question: Is there a tiling, i.e.\ an $n\times n$ square $A\in \mathcal{T}^{n\times n}$ of tiles, such that all adjacent tiles $A(i,j)=\tile{a}{\boldsymbol{b}}{\boldsymbol{c}}{d}$ and $A(i,j+1)=\tile{\alpha}{\beta}{\gamma}{\boldsymbol{\delta}}$ resp.\ $A(i+1,j)=\smash[t]{\tile{\boldsymbol{\tilde a}}{\tilde b}{\tilde c}{\tilde d}}$ fullfill $b=\delta$ resp.\ $c=\tilde a$
\end{problem}
\begin{prop}[{\cite[GP13]{GJ79}\footnote{The source only mentions that the directed-hamilton-path problem is
reduced to $\sqtile$. To give the interested reader a basis for the
proof: $n:=\#\text{vertices}$, the series of vertices in the hamilton
path is written on the diagonal of the $n\times n$-square and the
corresponding edges are on the diagonals above and below the main
diagonal.}}]
$\sqtile$ is $\npcomp$.
\end{prop}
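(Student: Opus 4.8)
The plan is to prove both directions of the $\npcomp ness$ claim, with the membership direction being the easy one. For $\sqtile\in\np$ I would simply take the tiling itself as the certificate: a witness is the full array $A\in\mathcal{T}^{n\times n}$, and verifying it amounts to comparing the colours across each of the $2n(n-1)$ interior edges against the matching conditions $b=\delta$ and $c=\tilde a$. The place where the hypothesis $n\leq|C|$ is essential is exactly the size of this certificate: since $n$ is bounded by the length of the input (the set $C$ is listed explicitly), the grid has only $n^{2}\leq|C|^{2}$ cells, so both the guess and its check are polynomial. Without the bound $n\leq|C|$, an instance with $n$ encoded in binary would force an exponentially large grid and membership would no longer be clear. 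Hence $\sqtile\in\np$.

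For the $\nphard$ness I would reduce from the directed Hamiltonian path problem, following the hint of the footnote. Given a digraph $G=(V,E)$ with $V=\{1,\dots,m\}$, set $n:=m$ and work over a colour set whose colours carry two independent components on every edge, a coordinate component and a vertex component, so that the two roles never interfere (one takes a product of colours). The first and genuinely hard task, precisely because $\sqtile$ admits \emph{no} boundary conditions, is to pin the geometry down with purely local matching. I would achieve this with a coordinate gadget: vertical edges carry a row counter that is forced to increment downward (a tile whose top colour is $k$ is forced to have bottom colour $k+1$), horizontal edges carry a column counter forced to increment rightward, and two terminal colours $\top,\bot$ are introduced with no tile admitting $\top$ as a bottom colour and none admitting $\bot$ as a top colour. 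This confines counter value $1$ to the first row and column and value $n$ to the last; since the grid is exactly $n\times n$, cell $(i,j)$ is then forced to carry row counter $i$ and column counter $j$, so the main diagonal is rigidly identified as the cells with equal counters.

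On the diagonal cells I would place vertex tiles that choose a vertex $p_i\in V$ and emit it on the vertex component to the right along row $i$ and upward along column $i$, while every other tile passes the vertex component through unchanged. A short case distinction shows the two broadcasts meet only in the upper triangle: a cell $(i,j)$ with $i<j$ receives $p_i$ horizontally from its own row and $p_j$ vertically from the diagonal cell $(j,j)$ lying below it, whereas cells with $i>j$ receive nothing and cause no conflict. I would then restrict the admissible crossing tiles so that at every $(i,j)$ with $i<j$ only $p_i\neq p_j$ is allowed, and on the superdiagonal $j=i+1$ additionally require $(p_i,p_{i+1})\in E$. The first restriction forces $p_1,\dots,p_n$ to be pairwise distinct, hence a permutation of $V$ by counting against $|V|=n$; the second forces consecutive diagonal vertices to be joined by an edge. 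Thus a legal tiling exists if and only if $p_1\to\cdots\to p_n$ is a directed Hamiltonian path of $G$. The construction is clearly polynomial and uses $|C|\geq n$, so the constraint $n\leq|C|$ is satisfied. The step I expect to dominate the work is the geometric rigidity above: with no boundary row or column to anchor against, one must force the candidate path onto the diagonal and arrange the broadcast pattern so that every pair of diagonal vertices is compared exactly once; once the coordinate gadget and the right-and-up broadcast are in place, distinctness and edge-existence are routine local tile constraints.
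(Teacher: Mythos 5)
Your proof is correct and takes exactly the route the paper indicates: the paper does not prove this proposition itself but cites it as problem GP13 of Garey and Johnson, with the footnote sketching precisely your reduction---directed Hamiltonian path, $n$ the number of vertices, the candidate path written on the main diagonal of the $n\times n$ square, and the edge information placed next to the diagonal. The details you supply beyond that hint (membership in $\np$ via the bound $n\leq|C|$, the bounded-counter gadget that forces the coordinates despite the absence of boundary conditions, and the right-and-up broadcast) are sound; your only deviation is that you check edge-existence on the superdiagonal alone and use the rest of the upper triangle for pairwise distinctness, whereas the footnote places the edges on the diagonals both above and below the main one, an immaterial difference.
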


\begin{problem}{$\sqtil$}
A set of tiles $\til$, relations $V, H\subseteq\til\times\til$ and a natural number $n\in\nat$

Question: Is there a tiling, i.e.\ an $n\times n$ square $T\in \til^{n\times n}$, such that adjacent tiles are in the horizontal relation $H$ resp.\ the vertical relation $V$:
\begin{align*}
\forall i \ \forall j<n\colon & (T(i,j),T(i,j+1))\in H\\ 
\forall i<n \ \forall j\colon & (T(i,j),T(i+1,j))\in V 
\end{align*}
\end{problem}

\begin{rem*}
Alternatively we write $T(i\cdot n+j):=T(i,j)$ and get a list where
$(T(m),T(m+1))\in H$ for $1\leq m<n^{2}\wedge m\notin n\integ$ and
$(T(m),T(m+n))\in V$ for $1\leq m\leq n^{2}-n$ has to be fulfilled. 
\end{rem*}
\begin{prop}
\label{prop:sqTilRelNP}$\sqtil$ is $\nphard$.
\end{prop}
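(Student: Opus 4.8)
The plan is to exhibit a polynomial-time many-one reduction from $\sqtile$, which is $\npcomp$ by the preceding proposition, to $\sqtil$. The guiding observation is that matching edge colours is merely a special instance of arbitrary horizontal and vertical compatibility relations, so the relations $H$ and $V$ can simply be read off from the edge colours of the tiles.

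Concretely, given a $\sqtile$ instance $(C,\mathcal{T},n)$, I would keep the tile set $\til:=\mathcal{T}$ and the number $n$ unchanged, and obtain the relations by unfolding the adjacency conditions. Writing a tile as $(a,b,c,d)$ with $a,b,c,d$ the top, right, bottom and left edge respectively, I set
\[
H:=\{((a,b,c,d),(\alpha,\beta,\gamma,\delta))\in\til\times\til\sep b=\delta\}
\]
(the right edge of the left tile matches the left edge of the right tile) and
\[
V:=\{((a,b,c,d),(\tilde a,\tilde b,\tilde c,\tilde d))\in\til\times\til\sep c=\tilde a\}
\]
(the bottom edge of the upper tile matches the top edge of the lower tile). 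Since $H$ and $V$ are produced by comparing all $\on{|\mathcal{T}|^{2}}$ pairs of tiles with a constant-time colour check, and $n$ is copied verbatim, the whole construction runs in polynomial time.

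For correctness I would argue that the reduction acts as the identity on squares: a square $A\in\mathcal{T}^{n\times n}$ fulfils the edge-matching conditions of $\sqtile$ --- namely $b=\delta$ for horizontally adjacent tiles and $c=\tilde a$ for vertically adjacent tiles --- if and only if the very same square, regarded as an element of $\til^{n\times n}$, satisfies $(T(i,j),T(i,j+1))\in H$ and $(T(i,j),T(i+1,j))\in V$, because $H$ and $V$ were defined to encode exactly these conditions. Hence the $\sqtile$ instance admits a tiling precisely when the constructed $\sqtil$ instance does.

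There is no genuine obstacle here; the argument is essentially a definitional unfolding. The only point demanding care is the bookkeeping of edge orientations, i.e.\ ensuring that the coordinate position playing the role of the ``right'' edge in one tile is paired with the ``left'' edge of its neighbour (and likewise ``bottom'' with ``top''), so that $H$ and $V$ faithfully mirror the conditions $b=\delta$ and $c=\tilde a$. Once the orientations are fixed consistently, both the polynomial-time bound and the equivalence of solutions are immediate.
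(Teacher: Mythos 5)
Your proposal is correct and matches the paper's own proof essentially verbatim: the same reduction keeping $\til:=\mathcal{T}$ and $n$, defining $H$ and $V$ by unfolding the edge-matching conditions $b=\delta$ and $c=\tilde{a}$, with the identical polynomial-time and correctness observations. Nothing is missing.
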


\begin{proof}
Given an input $C$, $\mathcal{T}$ and $n$ for $\sqtile$. Let 
\begin{align*}
H:= & \{(\tile a{\boldsymbol{b}}cd,\tile{\alpha}{\beta}{\gamma}{\boldsymbol{\delta}})\in\mathcal{T}\times\mathcal{T}\sep b=\delta\}\text{,}\\
V:= & \{(\tile ab{\boldsymbol{c}}d,\tile{\boldsymbol{\alpha}}{\beta}{\gamma}{\delta})\in\mathcal{T}\times\mathcal{T}\sep c=\alpha\}
\end{align*}
and $\til:=T$. Then there is a tiling $T$ for $\sqtil(\til,H,V,n)$
if and only if there is one for $\sqtile(C,\mathcal{T},n)$. The construction
of $\til$, $H$ and $V$ works obviously in polynomial time.
\end{proof}
\begin{rem*}
One can translate $\sqtil$ to $\sqtile$ as well, as outlined in
a paper of van Emde Boas \cite[p.~7]{Van97}.
\end{rem*}

\subsection{From $\protect\sqtil$ to $\protect\coneq$ }

This is the most interesting reduction in the chain. Here a truly
original idea, not present in the proof of Martens et al.\ \cite{MNS10},
is necessary.

\begin{problem}{$\coneq$}
Finite languages $\lan$, $\lao$ and $\lat$ over a finite alphabet $\alph$ given as $\dfa$s

Question: Does $\lan = \lao \lat$ hold
\end{problem}

Now we will reduce $\sqtil$ to $\neg\coneq$. The most interesting
point, compared to the regular language case, is that we work over
the alphabet $\til\times[1,n^{2}]$ instead of just $\til$. This
allows us, for a word in $\lao\lat$, to detect the point where we
jump from $\lao$ to $\lat$.
\begin{prop}
\label{prop:ConEqNP}$\coneq$ is $\conp\text{\,-\,}complete$.
\end{prop}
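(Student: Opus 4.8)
The plan is to prove membership in $\conp$ and $\conphard$ness separately. For the upper bound I would show $\neg\coneq \in \np$. Since $\lan,\lao,\lat$ are finite, their $\dfa$s are acyclic, so every word in $\lan \cup \lao\lat$ has length below the number of states. A nondeterministic machine can therefore guess a word $w$ of polynomial length, decide $w \in \lan$ by simulating its $\dfa$, and decide $w \in \lao\lat$ by testing each of the polynomially many factorisations $w = xy$ for $x \in \lao$ and $y \in \lat$; it accepts iff $w$ lies in exactly one of $\lan$ and $\lao\lat$. Such a $w$ exists iff $\lan \neq \lao\lat$, so $\neg\coneq \in \np$ and hence $\coneq \in \conp$.

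For the lower bound I would give a polynomial-time reduction from $\sqtil$ to $\neg\coneq$, which by \prettyref{prop:sqTilRelNP} shows $\coneq$ is $\conphard$. Given $(\til,V,H,n)$ I would work over the alphabet $\til \times [1,n^2]$ and read a word $(t_1,1)(t_2,2)\cdots(t_{n^2},n^2)$ as the tiling $T(m)=t_m$; call the words whose position labels are exactly $1,\ldots,n^2$ in order the \emph{candidate words}, forming a language $C$. A candidate word is a genuine tiling iff $(t_m,t_{m+1})\in H$ at every horizontal left-endpoint ($1 \le m < n^2$, $m \notin n\integ$) and $(t_m,t_{m+n})\in V$ at every vertical left-endpoint ($m \le n^2-n$). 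The key idea is to let the single concatenation split carry out one such check. I would take $\lao$ to be all valid prefixes $(t_1,1)\cdots(t_k,k)$ with $0 \le k \le n^2$ and arbitrary tiles, and $\lat$ to be all suffixes $(t_m,m)\cdots(t_{n^2},n^2)$ that carry a violation at their starting position $m$, i.e.\ $m$ is a horizontal left-endpoint with $(t_m,t_{m+1})\notin H$ or a vertical left-endpoint with $(t_m,t_{m+n})\notin V$. The point of splitting immediately \emph{before} the offending position, rather than between the two constrained tiles, is that both tiles then lie inside the suffix, so $\lat$ can verify even the non-adjacent vertical constraint by itself; this is exactly what the position labels buy and what the bare alphabet $\til$ could not. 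Because a factorisation $c=uv$ of a candidate word with $u\in\lao$, $v\in\lat$ forces $v$ to start at position $|u|+1$ (otherwise the labels of $uv$ would not run consecutively), a candidate word lies in $\lao\lat$ iff it has some violation.

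Finally I would set $\lan := C \cup \lao\lat$ and claim $\lan=\lao\lat$ iff $(\til,V,H,n)$ admits no tiling: as $\lan \supseteq \lao\lat$, equality holds iff $C\subseteq\lao\lat$, i.e.\ iff every candidate word has a violation, i.e.\ iff no candidate word is a genuine tiling. Thus a tiling exists iff $\lan\neq\lao\lat$, completing the reduction. The main obstacle, and the reason for the union with $C$, is the \emph{spurious} concatenations $uv\in\lao\lat$ whose labels jump at the junction and which are therefore not candidate words: without care these would make $\lan$ and $\lao\lat$ differ regardless of any tiling. Adding $C$ to $\lao\lat$ cancels exactly these spurious words, so that the set of genuine tilings remains the only possible witness of inequality. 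The remaining work is routine but must be carried out carefully: building $\lao$, $\lat$ and $\lan$ directly as polynomial-size $\dfa$s — tracking the current position, permitting a single label jump, and retaining the first tile for the up to $n$ steps needed to reach position $m+n$ for the vertical check — and verifying that the whole construction is computable in polynomial time.
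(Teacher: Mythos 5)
Your proposal is correct and follows essentially the same route as the paper: the same $\conp$ witness argument (a polynomial-length word in the symmetric difference, checked via all factorisations), and the same reduction from $\sqtil$ using the position-labelled alphabet $\til\times[1,n^{2}]$, the language $\lat$ of suffixes starting at a violation, and $\lan:=C\cup\lao\lat$ with the consecutive-labelling argument pinning down the split point. The only deviation is that your $\lao$ admits prefixes up to length $n^{2}$ where the paper stops at $n^{2}-2$; this is harmless (every word of $\lat$ has length at least $2$, so the longer prefixes never occur in a factorisation of a candidate word) and in fact agrees better with the paper's own $\dfa$ construction for $\lan$, which allows the label jump after a prefix of any length.
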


\begin{proof}
It is obviously in $\conp$, since a word in $\lan\setminus\lao\lat$
or in $\lao\lat\setminus\lan$ is a witness for $\lan\neq\lao\lat$
and the longest words to consider have length $\on n$, as the languages
are finite. So we get to the $\conphard ness$. Suppose we can solve
$\coneq$. Let $n$, $\til$ and $V,H\subseteq\til\times\til$ be
an input for $\sqtil$. We define 
\begin{align*}
\lao:={} & \vphantom{\bigcup_{n^{2}\integ}}\{(t_{1},1)(t_{2},2)\ldots(t_{m},m)\in(\til\times[1,n^{2}])^{*}\sep m\leq n^{2}-2\}\text{,}\\
\lat:={} & \enskip\bigcup_{\mathclap{1\leq m\leq n^{2},m\notin n\integ}}\enskip\{(t_{m},m)(t_{m+1},m+1)\ldots(t_{n^{2}},n^{2})\in(\til\times[1,n^{2}])^{*}\sep(t_{m},t_{m+1})\notin H\}\,\cup\\
 & \enskip\bigcup_{\mathclap{1\leq m\leq n^{2}-n}}\enskip\{(t_{m},m)(t_{m+1},m+1)\ldots(t_{n^{2}},n^{2})\in(\til\times[1,n^{2}])^{*}\sep(t_{m},t_{m+n})\notin V\}
\end{align*}
and
\[
\lan:=\lao\lat\cup\{(t_{1},1)(t_{2},2)\ldots(t_{n^{2}},n^{2})\in(\til\times[1,n^{2}])^{n^{2}}\}.
\]
The size of the $\dfa$s of the defined languages is polynomial in
the size of the input and can be constructed in polynomial time as
shown below.

\begin{samepage}
The automaton for $L_{1}$ is pretty simple and has $n^{2}-1$ states:
\begin{center}
\tikzset{every state/.style={minimum size=1.25cm}}\begin{tikzpicture}[shorten >=1pt,node distance=3.8,on grid,auto]%shorten >=1pt
\node[state,initial,accepting] (q_0) {$0$};
\node[state,accepting] (q_1) [right=of q_0] {$1$};
\node[] (q_2) [right=of q_1] {~~~~\ldots~~~~};
\node[state,accepting] (q_n2) [right=of q_2] {\footnotesize{$n^2-2$}};
\path[->]
(q_0) edge node {\footnotesize{$\til \times \{1\}$}} (q_1)
%(q_01) edge node {\footnotesize{$\til \times \{2\}$}} (q_1)
(q_1) edge node {\footnotesize{$\til \times \{2\}$}} (q_2)
(q_2) edge node {\footnotesize{$\til \times \{n^2-2\}$}} (q_n2);
\end{tikzpicture}
\end{center}
\end{samepage}The automaton for $\lat$ is more complicated and depends on $V$
and $H$, but it is polynomial in size. We will give two automata
$M_{V}$ and $M_{H}$ with polynomial sizes such that 
\[
L(M_{H})=\enskip\bigcup_{\mathclap{1\leq m\leq n^{2},m\notin n\integ}}\enskip\{(t_{m},m)(t_{m+1},m+1)\ldots(t_{n^{2}},n^{2})\in(\til\times[1,n^{2}])^{*}\sep(t_{m},t_{m+1})\notin H\}
\]
 and 
\[
L(M_{V})=\enskip\bigcup_{\mathclap{1\leq m\leq n^{2}-n}}\enskip\{(t_{m},m)(t_{m+1},m+1)\ldots(t_{n^{2}},n^{2})\in(\til\times[1,n^{2}])^{*}\sep(t_{m},t_{m+n})\notin V\}.
\]
Obviously $\lat=L(M_{H}\cup M_{V})$ and the union automaton still
has polynomial size \cite[Proof of Theorem 2.1]{Yu1997}. 

The automaton $M_{H}$ is constructed as follows: The set of states
is
\[
Q_{H}:=\{s_{H}\}\disun\{\state{t,m}\sep t\in\til,m\in[1,n^{2}]\setminus n\integ\}\disun\{\varsigma_{i}\sep i\in[2,n^{2}]\}.
\]
The automaton has to check for a word $(t_{1},m_{1})(t_{2},m_{2})\ldots(t_{k},m_{k})$
whether $(t_{1},t_{2})\notin H$ and whether $m_{i+1}=m_{i}+1$ for
all $i$. So after reading the first letter the state has to store
$t_{1}$ and every state has to store the most recent $m_{i}$. Therefore
after the first character we go to the corresponding state $\state{t_{1},m_{1}}$.
If the next character fulfils both $(t_{1},t_{2})\notin H$ and $m_{2}=m_{1}+1$,
we only have to check $m_{i+1}=m_{i}+1$. Hence we only store the
most recent $m_{i}$, by going to the state $\state{m_{i}}$. Once
we get to $\state{n^{2}}$ we accept. If otherwise there was any mistake
we stop the run at that point. 

Here a formal definition of the transition function $\delta_{H}$:
\begin{alignat*}{2}
\delta{}_{H}(s_{H},(t,m)):={} & \state{t,m} &  & \;\text{ for }1\leq m<n^{2}\text{ and }m\notin n\integ\\
\delta_{H}(\state{t,m},(t',m+1)):={} & \state{m+1} &  & \text{\; for }1\leq m<n^{2}\text{ and }(t,t')\notin H\\
\delta_{H}(\state m,(t,m+1)):={} & \state{m+1} &  & \;\text{ for }1<m<n^{2}
\end{alignat*}
The automaton is then defined as $M_{H}:=(Q_{H},\til\times[1,n^{2}],\delta_{H},\{s_{H}\},\{\state{n^{2}}\})$
and has $1+|\til|\cdot(n^{2}-n)+n^{2}-1$ states.

The automaton $M_{V}$ is quite similar. The only difference is, that
we have to check for a word $(t_{1},m_{1})(t_{2},m_{2})\ldots(t_{k},m_{k})$,
whether $(t_{1},t_{1+n})\notin V$. Therefore we need the additional
states $\stat{t,m,o}$, where $o$ stores how many characters away
from $(t_{1},m_{1})$ we already are. Hence we get the following set
of states
\[
Q_{V}:=\{s_{V}\}\disun\{\stat{t,m,o}\sep t\in\til,m\in[1,n^{2}-n],o\in[0,n-1]\}\disun\{\stat i\sep i\in[n+1,n^{2}]\}\text{,}
\]
 the transition function
\begin{alignat*}{2}
\delta{}_{V}(s_{H},(t,m)):={} & \stat{t,m,0} &  & \;\text{ for }1\leq m\leq n^{2}-n\\
\delta_{V}(\stat{t,m,i},(t',m+i+1)):={} & \stat{t,m,i+1} &  & \;\text{ for }0\leq i<n-1\\
\delta_{V}(\stat{t,m,n-1},(t',m+n)):={} & \stat{m+n} &  & \;\text{ for }1\leq m\leq n^{2}-n\text{ and }(t,t')\notin V\\
\delta_{V}(\stat m,(t,m+1)):={} & \stat{m+1} &  & \;\text{ for }1\leq m<n^{2}
\end{alignat*}
and finally the $\dfa$ is given as $M_{V}:=(Q_{V},\til\times[1,n^{2}],\delta_{V},\{s_{V}\},\{\stat{n^{2}}\})$
with $1+|\til|\cdot(n^{2}-n)\cdot n+n^{2}-n$ states. 

So at last we have to show that $L=\lao\lat\cup\{(t_{1},1)(t_{2},2)\ldots(t_{n^{2}},n^{2})\in(\til\times[1,n^{2}])^{n^{2}}\}$
has a polynomial-sized automaton. Using this description, we might
get an exponential blow-up from the concatenation, but (with the help
from the $[1,n^{2}]$ part of the alphabet) the language can be characterised
a bit differently. It basically contains all properly numbered tilings
and additionally those with one jump and a forbidden tiling (with
a fault, either vertically or horizontally, directly after the jump).
An automaton for this can be constructed using a $\dfa$ $M_{2}=(Q_{2},\til\times[1,n^{2}],\delta_{2},\{s_{2}\},F_{2})$
that accepts $L_{2}$.

As set of states we use $Q:=Q_{2}\setminus\{s_{2}\}\disun[0,n^{2}]$
and the transition function is as follows
\begin{alignat*}{1}
\delta(q,(t,m)):={} & \begin{cases}
\begin{cases}
q+1 & m=q+1,0\leq q<n^{2}\\
\delta_{2}(s_{2},(t,m)) & m\neq q+1
\end{cases} & q\in[0,n^{2}]\\
\delta_{2}(q,(t,m)) & q\in Q_{2}\setminus\{s_{2}\}
\end{cases}
\end{alignat*}
The idea is to check for legal numbering with the states $[0,n^{2}]$.
If there is a leap in the numbering, we jump into the automaton for
$\lat$. So the automaton is given by $M:=(Q,\til\times[1,n^{2}],\delta,\{0\},\{n^{2}\}\cup F_{2})$.
Obviously $L(M)=\lan$ holds and $M$ has polynomial size.

Thus $\dfa$s for $\lao$, $\lat$ and $\lan$ are constructed in
polynomial time and have polynomial size in the size of the tiling
problem. Combining this with the following \prettyref{lem:concatSqrTilconstructedLanguages}
yields a reduction from $\sqtil$ to $\neg\coneq$. That $\sqtil$
is $\nphard$ (\prettyref{prop:sqTilRelNP}) completes the proof.
\end{proof}
\begin{lem}
\label{lem:concatSqrTilconstructedLanguages}Let $n$, $\til$, $V,H\subseteq\til\times\til$
be an input for $\sqtil$ and $\lao$, $\lat$ and $\lan$ constructed
as above. Then $\lan=\lao\lat$ if and only if there is no legal tiling.
\end{lem}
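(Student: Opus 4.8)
The plan is to turn the biconditional into a single inclusion. By construction $\lan = \lao\lat \cup \mathrm{Full}$, where
\[
\mathrm{Full} := \{(t_1,1)(t_2,2)\ldots(t_{n^2},n^2)\in(\til\times[1,n^2])^{n^2}\}
\]
is the set of all correctly numbered words of full length $n^2$, and $\lao\lat\subseteq\lan$ holds trivially. Hence $\lan = \lao\lat$ if and only if $\mathrm{Full}\subseteq\lao\lat$, and the task reduces to identifying which full words already lie in $\lao\lat$.

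The core claim I would prove is that a full word $w=(t_1,1)\ldots(t_{n^2},n^2)$ lies in $\lao\lat$ exactly when it does \emph{not} encode a legal tiling, i.e.\ when it contains at least one fault --- a horizontal pair $(t_m,t_{m+1})\notin H$ at some position $m\notin n\integ$, or a vertical pair $(t_m,t_{m+n})\notin V$. For the easy direction I would split $w$ at a fault position $p$ into $u:=(t_1,1)\ldots(t_{p-1},p-1)$ and $v:=(t_p,p)\ldots(t_{n^2},n^2)$. Because $w$ is correctly numbered, $u$ is a correctly numbered prefix of length $p-1\le n^2-2$, so $u\in\lao$ (with $u=\emptyword\in\lao$ when $p=1$), while $v$ begins exactly at the faulty position and runs to $n^2$, landing in the $H$- or $V$-branch of $\lat$ according to the fault type; thus $w=uv\in\lao\lat$.

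For the converse I would exploit the numbering component of $\til\times[1,n^2]$, which is the real reason for enlarging the alphabet. If $w=uv$ with $u\in\lao$ and $v\in\lat$, then since $u$ is literally the length-$m$ prefix of $w$ its numbers are $1,\ldots,m$ with $m\le n^2-2$, and $v$ is the complementary suffix $(t_{m+1},m+1)\ldots(t_{n^2},n^2)$, numbered consecutively. Membership $v\in\lat$ then forces a fault precisely at position $m+1$, so $w$ is illegal. The numbering also explains why the remaining elements of $\lao\lat$ are harmless: a ``jump-word'' whose two parts do not meet at consecutive numbers has length different from $n^2$ (an overlap makes it longer, a gap shorter), so it never lies in $\mathrm{Full}$ and cannot affect the inclusion $\mathrm{Full}\subseteq\lao\lat$.

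Putting the two directions together yields $\mathrm{Full}\cap\lao\lat=\{\text{illegal full words}\}$, equivalently $\mathrm{Full}\setminus\lao\lat=\{\text{legal tilings}\}$. Hence $\mathrm{Full}\subseteq\lao\lat$ holds if and only if no full word encodes a legal tiling, i.e.\ if and only if there is no legal tiling, which by the first step is exactly $\lan=\lao\lat$. I expect the main obstacle to be the bookkeeping in the converse: checking that the admissible split points $m\in\{0,\ldots,n^2-2\}$ cover every horizontal fault position ($\le n^2-1$) and every vertical fault position ($\le n^2-n$), and carefully handling the boundary cases --- the empty prefix at $p=1$ and the row-boundary exclusion $m\in n\integ$ that removes spurious horizontal adjacencies across the right edge of the grid.
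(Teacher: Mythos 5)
Your proof is correct and takes essentially the same approach as the paper's: reduce $\lan=\lao\lat$ to the inclusion of the set of full, correctly numbered words into $\lao\lat$, and show that a full word lies in $\lao\lat$ exactly when some split puts a fault at the start of the $\lat$-factor. You in fact spell out more carefully than the paper the direction that an illegal tiling can be split (checking $p-1\le n^{2}-2$, the empty prefix, and the row-boundary condition), which the paper compresses into a single ``Hence''.
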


\begin{proof}
A word in $\{(t_{1},1)(t_{2},2)\ldots(t_{n^{2}},n^{2})\in(\til\times[1,n^{2}])^{n^{2}}\}$
can be interpreted as a tiling, where $T(j)=t_{j}$. Every word $w_{1}w_{2}\in\{(t_{1},1)(t_{2},2)\ldots(t_{n^{2}},n^{2})\in(\til\times[1,n^{2}])^{n^{2}}\}\cap\lao\lat$
with $w_{i}\in\lan_{i}$ represents a tiling that violates the given
relations: Let $w_{2}=(t_{m},m)(t_{m+1},m+1)\ldots(t_{n^{2}},n^{2})\in\lat$,
then either $(t_{m},t_{m+1})\notin H$ or $(t_{m},t_{m+n})\notin V$
which contradicts a legal tiling.

Let $\lan=\lao\lat$, then $\{(t_{1},1)(t_{2},2)\ldots(t_{n^{2}},n^{2})\in(\til\times[1,n^{2}])^{n^{2}}\}\subset\lao\lat$,
so every possible tiling violates the relations and therefore there
is no legal tiling. On the other hand, if there is no legal tiling,
then every possible tiling violates a relation. Hence $\{(t_{1},1)(t_{2},2)\ldots(t_{n^{2}},n^{2})\in(\til\times[1,n^{2}])^{n^{2}}\}\subset\lao\lat$
which yields $\lan=\lao\lat$.
\end{proof}

\subsection{From $\protect\coneq$ to $\protect\primfin$}
\begin{namedthm}[\prettyref{thm:PrimNPhard}]
$\primfin$ is $\nphard$.
\end{namedthm}
The following proof is similar to \cite[Proof of Theorem 6.4]{MNS10}.
The difference is, since they treat (non-finite) regular languages,
that they reduce the problem $\lao\lat\overset{?}{=}\alph^{*}$ (so
for them $\lan=\alph^{*}$).
\begin{proof}[Proof of \prettyref{thm:PrimNPhard}]
Let $\lao$, $\lat$ and $\lan$ be finite languages over the alphabet
$\alph$ given as $\dfa$s. We want to construct a language $A$,
such that $A$ is decomposable if and only if $\lan=\lao\lat$, which
reduces $\coneq$ to $\neg\primfin$ and proves the theorem by the
$\conphard ness$ of $\coneq$ (\prettyref{prop:ConEqNP}).

If $L=\varnothing$, $L=\{\emptyword\}$, $\lao=\varnothing$ or $\lat=\varnothing$,
then it is easy to check whether $\lan=\lao\lat$. So we can assume
$\varnothing\neq L\neq\{\emptyword\}$ and $\lao\neq\varnothing\neq\lat$.

Let $\alph':=\{a'\sep a\in\alph\}$ be a disjoint copy of the alphabet
and let $\$\notin\alph\disun\alph'$ be an additional letter. $\lao'$
and $\lat'$ are the respective languages over $\alph'$.

Now we define the language 
\[
A:=\lan\cup\lao\$\lat'\cup\lao'\$\lat\cup\lao'\$\$\lat'.
\]
The language's $\dfa$ is obviously constructable in polynomial time.
\begin{lem}
\label{lem:Adecomposition}The language $A$ is either prime or its
only non-trivial decomposition is $A_{1}\circ A_{2}$ with $A_{1}:=\lao\cup\lao'\$$
and $A_{2}:=\lat\cup\$\lat'$.
\end{lem}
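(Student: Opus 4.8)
The plan is to fix an arbitrary non-trivial decomposition $A=B_{1}B_{2}$ (so $B_{1}\neq\{\emptyword\}\neq B_{2}$) and prove that it must already be the stated one, i.e.\ $B_{1}=A_{1}:=\lao\cup\lao'\$$ and $B_{2}=A_{2}:=\lat\cup\$\lat'$. This yields the dichotomy: either $A$ admits no decomposition (prime) or every decomposition equals $A_{1}\circ A_{2}$. The whole argument rests on one structural observation about the shapes occurring in $A$: every word of $A$ lies in exactly one of
\[ \alph^{*},\quad \alph^{*}\$(\alph')^{*},\quad (\alph')^{*}\$\alph^{*},\quad (\alph')^{*}\$\$(\alph')^{*}, \]
which I record as three facts used throughout: (i) no word of $A$ carries more than two $\$$; (ii) a $\$$-free word of $A$ is entirely over $\alph$ or entirely over $\alph'$, never mixing the two; (iii) $A$ contains no non-empty word lying in $(\alph')^{*}$. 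In particular no word of the ``mixed'' shape $\alph^{+}(\alph')^{+}$ or $(\alph')^{+}\alph^{+}$ belongs to $A$.

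First I would pin down the number of $\$$-symbols. Let $m_{i}$ be the maximal number of $\$$ appearing in a word of $B_{i}$. Since $\lao,\lat\neq\varnothing$ the family $\lao'\$\$\lat'$ is non-empty, so $A$ has a word with two $\$$; as no word of $A$ has three or more by (i), additivity of the $\$$-count under concatenation gives $m_{1}+m_{2}=2$. I would then eliminate $(m_{1},m_{2})=(2,0)$: here every factor of $B_{2}$ is $\$$-free, and evaluating the two-$\$$ word through the factorisation forces $B_{2}\subseteq(\alph')^{*}$; but then pairing a non-empty $w\in\lan$ (which exists because $\varnothing\neq\lan\neq\{\emptyword\}$) with a non-empty factor of $B_{2}$ produces either a forbidden $\alph^{+}(\alph')^{+}$ word or the non-empty $(\alph')^{*}$-word itself inside $A$, contradicting (ii)/(iii). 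The case $(0,2)$ is the mirror image. Hence $m_{1}=m_{2}=1$.

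With $m_{1}=m_{2}=1$ in hand, I would look at a word $u'\$\$v'\in\lao'\$\$\lat'$: because its two $\$$ are adjacent and each factor may hold at most one of them, the only admissible split is $b_{1}=u'\$$, $b_{2}=\$v'$; letting $u',v'$ range gives $\lao'\$\subseteq B_{1}$ and $\$\lat'\subseteq B_{2}$. Using these as probes I would then show $B_{1}\subseteq A_{1}$ and $B_{2}\subseteq A_{2}$. A $\$$-free $b_{1}\in B_{1}$ cannot be over $\alph'$ (pairing it with the factors of a non-empty $w\in\lan$ again yields a forbidden mixed word or a non-empty $(\alph')^{*}$-word in $A$), so it is over $\alph$; appending some $\$v'\in B_{2}$ gives $b_{1}\$v'\in\alph^{*}\$(\alph')^{*}\cap A$, which pins $b_{1}\in\lao$. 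A one-$\$$ factor $b_{1}$ produces $b_{1}\$v'$ with two $\$$, which by adjacency must lie in $\lao'\$\$\lat'$, forcing $b_{1}\in\lao'\$$. Symmetrically $B_{2}\subseteq A_{2}$.

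Finally I would establish the reverse inclusions. For $u\in\lao$, $u\neq\emptyword$, factor $u\$v'\in\lao\$\lat'$ as $b_{1}b_{2}$ with (by the inclusions just obtained) $b_{1}\in A_{1}$, $b_{2}\in A_{2}$; since $b_{2}$ must absorb the single $\$$ it begins with $\$$, whence $b_{1}=u\in B_{1}$, and the case $u=\emptyword$ is handled by splitting $\$v'$. Together with $\lao'\$\subseteq B_{1}$ this gives $A_{1}\subseteq B_{1}$, and symmetrically $A_{2}\subseteq B_{2}$, so $B_{1}=A_{1}$ and $B_{2}=A_{2}$. I expect the backbone to be short but the genuine obstacle to be the disciplined bookkeeping of the degenerate cases --- notably $\emptyword\in\lao$, $\emptyword\in\lat$, or $\lat=\{\emptyword\}$ --- where several shape classes collapse and the ``obvious'' factorisation of a one-$\$$ word is no longer the only one; the uniform safeguard is that the existence of a non-empty $\$$-free word in $\lan$, combined with facts (ii) and (iii), always reinstates a contradiction against any stray factor.
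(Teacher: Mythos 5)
Your main line of argument is the same as the paper's: constrain the factors of a hypothetical non-trivial decomposition by the admissible shapes of words in $A$, then identify the factors by probing with $\$$-carrying words. Your reorganisation (counting $\$$-signs, extracting $\lao'\$\subseteq B_{1}$ and $\$\lat'\subseteq B_{2}$ from the adjacent-$\$$ words, then two-sided inclusions) differs only cosmetically from the paper, which first shows $\al\subseteq\alph^{*}\cup\alph'^{*}\$$ and $\ar\subseteq\alph^{*}\cup\$\alph'^{*}$ and then reads both factors off intersections with $A$; all your non-degenerate steps are sound. (One wording slip: for $\$$-free $b_{1}$, ``not over $\alph'$'' does not by itself give ``over $\alph$''; but your own pairing argument repairs this, since $b_{1}c_{2}$ is a $\$$-free word of $A$, hence lies in $\lan\subseteq\alph^{*}$, so $b_{1}$ contains no letter of $\alph'$ at all.)

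The genuine gap is exactly the degenerate case you defer, and it cannot be closed, because the statement is false there. Suppose $\lat=\{\emptyword\}$ and $\emptyword\in\lao$. Then your only probe of the form $\$v'$ is the word $\$$ itself, which splits both as $\emptyword\cdot\$$ and as $\$\cdot\emptyword$, so ``splitting $\$v'$'' cannot force $\emptyword\in B_{1}$, and the announced safeguard (a non-empty $\$$-free word of $\lan$ plus facts (ii)/(iii)) yields no contradiction. Concretely, take $\alph=\{a\}$, $\lao=\{\emptyword,a\}$, $\lat=\{\emptyword\}$, $\lan=\{a\}$; all standing assumptions of the reduction hold, and $A=\{a,\$,a\$,a'\$,\$\$,a'\$\$\}$. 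Then $B_{1}:=\{a,\$,a'\$\}$ and $B_{2}:=\{\emptyword,\$\}$ satisfy $B_{1}B_{2}=A$, a non-trivial decomposition with $B_{1}\neq A_{1}=\{\emptyword,a,\$,a'\$\}$; moreover $\emptyword\in A_{1}A_{2}\setminus A$, so $A_{1}\circ A_{2}$ is not even a decomposition of $A$, yet $A$ is not prime. You are in good company: the paper's own proof breaks silently at the same spot, since on this example $\al\cap\alph^{*}=\{a\}\subsetneq\lao$, contradicting the asserted identity $\al\cap\alph^{*}=\lao$, and the proposition following the lemma fails with it ($A$ is decomposable although $\lan\neq\lao\lat$). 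Lemma and both proofs become correct under the extra hypothesis $\emptyword\notin\lat$ (or, symmetrically, $\emptyword\notin\lao$), and the languages produced by the paper's reduction from $\sqtil$ do satisfy $\emptyword\notin\lat$, so the main theorem is unaffected. But as written, your claim that the degenerate bookkeeping ``always reinstates a contradiction'' is not merely unproven; it is unprovable, and a complete write-up must add such a hypothesis and check it where the lemma is applied, or treat these instances separately.
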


\begin{rem*}
The proof of Martens et al.\ \cite[Claim 6.5]{MNS10} in the paper's
appendix works nearly word for word. It is rather technical and adds
no real value. For the sake of completeness we provide one regardless. 
\end{rem*}
\begin{proof}
Suppose $A=\al\ar$ is a non-trivial decomposition. We first show
that $A_{l}\subseteq\alph^{*}\cup\alph'^{*}\$$ and symmetrically
$A_{r}\subseteq\alph^{*}\cup\$\alph'^{*}$.

Suppose $\al$ contains a word $w_{l}$ with two $\$$-letters in
it or where a symbol from $\alph$ precedes a $\$$-sign. \\
In both cases, for $w_{l}w_{r}$ to be in $A$, $w_{r}$ has to be
in $\alph'^{*}$. Thus $\ar\subseteq\alph'^{*}$ and, since the decomposition
is non-trivial, $\ar\supsetneq\{\emptyword\}$. The language $\lan\subseteq\alph^{*}$
contains at least one word $v$ of length $\geq1$ (see premises).
So we can concatenate $v\in\lan\subseteq\al$ (because $\ar\subseteq\alph'^{*}$
and $L\subseteq A$) with a word $\emptyword\neq w\in\ar$ and should
get $vw\in\al\ar=A$. That is a contradiction, as $A$ does not contain
a word in $\alph^{+}\alph'^{+}$. 

The language $\ar$ includes at least one word $w$ containing a $\$$,
because $A\supseteq\lao'\$\$\lat'\neq\varnothing$ (we assume $\lao\neq\varnothing\neq\lat$)
and any word in $\al$ contains at most one $\$$. If any word $v\in\al$
includes a $\$$ not as its last character, we get a paradox because
$vw\in\al\ar=A$ incorporates two $\$$-signs that are not next to
each other. That cannot happen for a word in $A$.

Now we know, every word in $\al$ contains at most one $\$$-sign
and if it contains one, the $\$$ is the last sign and is not preceded
by a letter in $\alph$. Hence $\al\subseteq\alph^{*}\cup\alph'^{*}\$$.
Symmetrically (one can look at the reversed languages) $\ar\subseteq\alph^{*}\cup\$\alph'^{*}$.

The intersection $A\cap\alph^{*}\$\alph'^{*}=\lao\$\lat'$ together
with the structures of $\al$ and $\ar$ yield $\al\cap\alph^{*}=\lao$
and symmetrically $\ar\cap\alph{}^{*}=\lat$. 

Similarly $A\cap\alph'^{*}\$\$\alph'^{*}=\lao'\$\$\lat'$ along with
the structures of $\al$ and $\ar$ imply $\al\cap\alph'^{*}\$=\lao'\$$
and $\ar\cap\$\alph'^{*}=\$\lat'$. Thus $\al=\lao\cup\lao'\$$ and
$\ar=\lat\cup\$\lat'$. 
\end{proof}
\begin{prop}
$A$ is decomposable if and only if $\lan=\lao\lat$.
\end{prop}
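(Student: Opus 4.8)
The plan is to reduce everything to the structural Lemma~\ref{lem:Adecomposition} together with a single distributive computation of the product $A_1 A_2$, where $A_1 = \lao \cup \lao'\$$ and $A_2 = \lat \cup \$\lat'$ as in that lemma. Distributing concatenation over union gives
\[
A_1 A_2 = \lao\lat \,\cup\, \lao\$\lat' \,\cup\, \lao'\$\lat \,\cup\, \lao'\$\$\lat',
\]
so that $A_1 A_2$ agrees with the defining expression $A = \lan \cup \lao\$\lat' \cup \lao'\$\lat \cup \lao'\$\$\lat'$ in all three $\$$-containing summands, the only possible discrepancy being between $\lan$ and $\lao\lat$.

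For the backward implication I would assume $\lan = \lao\lat$; then the displayed identity immediately gives $A_1 A_2 = A$. It remains to observe that this decomposition is non-trivial: since $\lao \neq \varnothing \neq \lat$ (a standing assumption), $A_1$ contains a word ending in $\$$ and $A_2$ a word beginning with $\$$, so $A_1 \neq \{\emptyword\} \neq A_2$. Hence $A$ is decomposable.

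For the forward implication I would invoke Lemma~\ref{lem:Adecomposition}: if $A$ is decomposable it has a non-trivial decomposition, and the lemma forces it to be exactly $A = A_1 A_2$. To extract $\lan = \lao\lat$ from this equality I would intersect both sides with $\alph^{*}$. Because $\$ \notin \alph \disun \alph'$, every summand containing a $\$$ is disjoint from $\alph^{*}$, while $\lan \subseteq \alph^{*}$ and $\lao\lat \subseteq \alph^{*}$; thus $A \cap \alph^{*} = \lan$ and $(A_1 A_2) \cap \alph^{*} = \lao\lat$, and the two sets are equal. I do not expect a genuine obstacle here, since Lemma~\ref{lem:Adecomposition} has already done the hard structural work; the only point requiring care is that the alphabet separation (the letter $\$$ absent from $\alph$, and the primed copy $\alph'$ disjoint from the unprimed letters) really does make the intersection with $\alph^{*}$ isolate precisely the $\lan$ versus $\lao\lat$ comparison, leaving the other three summands untouched on both sides.
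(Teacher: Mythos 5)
Your proposal is correct and follows essentially the same route as the paper: the forward direction invokes Lemma~\ref{lem:Adecomposition} to pin the decomposition down to $A_1A_2$ and then isolates $\lan$ versus $\lao\lat$ by intersecting with $\alph^{*}$, while the backward direction verifies $A=A_1A_2$ directly. Your version is in fact slightly more explicit than the paper's (the distributive expansion of $A_1A_2$ and the non-triviality check $A_1\neq\{\emptyword\}\neq A_2$ are left implicit there), but the underlying argument is the same.
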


\begin{proof}
If $A$ is decomposable, we know $A=A_{1}A_{2}$ as defined in \prettyref{lem:Adecomposition}.
Since $A\cap\alph^{*}=L$, $A_{1}\cap\alph^{*}=\lao$, $A_{2}\cap\alph^{*}=\lat$
and $A=A_{1}A_{2}$, $\lan=\lao\lat$ holds.

If on the other hand $\lan=\lao\lat$, obviously $A=A_{1}A_{2}$ as
in \prettyref{lem:Adecomposition}.
\end{proof}
So accordingly we get that the $\conphard$ problem $\coneq$ is reducible
to the complement of $\primfin$. Therefore the original problem $\primfin$
is $\nphard$.
\end{proof}

\section{Final remarks\label{sec:Final-remarks}}

There are still many open questions related to $\primfin$. Obviously
the exact complexity has to be determined. Our attempts to find an
$\np$-algorithm failed, so perhaps the lower bound has to be improved
further. A $\conp$ lower bound for primality with a list as input
would strongly hint to a higher lower bound for $\primfin$. Having
the input as an $\nfa$ would be yet another problem to consider.
In that case basically nothing is known, since \prettyref{thm:partitionSets}
is not applicable in its current form.

Aside from these variants for the input, the decomposition into three,
four or generally into $m$ languages is a problem to consider (for
all the input variants). A priori we do not know much about that.
For lists we still get $\conp$ for fixed $m$ by naively guessing
the partition. And for $\dfa$s we can check, for all possible decompositions
into two languages, whether those languages are decomposable again.
That approach clearly is not efficient.

Comprehensively we can say that there are still many open questions
regarding the complexity of decompositions of finite languages.

\cleardoublepage\newpage{}
\begin{acknowledgement*}
I want to thank Prof.~Dr.~Wim Martens for his guidance and support,
Dr.~Matthias Niewerth for his advice, especially for the idea to
``count'' the tiles, and Johannes Doleschal for proofreading.
\end{acknowledgement*}
\bibliographystyle{amsalpha}
\phantomsection\addcontentsline{toc}{section}{\refname}\bibliography{BAInfo}

\end{document}